\newlength{\mytopmargin}
\newlength{\myleftmargin}
\newtheorem{lemma}{Lemma}
\newtheorem{prop}[lemma]{Proposition}
\newtheorem{thm}{Theorem}
\theoremstyle{remark}
\newtheorem{remark}[thm]{Remark}
\numberwithin{equation}{section}
\begin{document}

\title{\Large\bf  A generalisation of the relation between
zeros of the complex Kac polynomial and eigenvalues of
truncated unitary matrices}
\author{Peter J. Forrester and Jesper R. Ipsen}
\date{}
\maketitle

\begin{center}
\it
School of Mathematics and Statistics, ARC Centre of Excellence for Mathematical
 and Statistical Frontiers, The University of Melbourne, \\
Victoria 3010, Australia
\end{center}

\bigskip
\begin{center}
\bf Abstract  
\end{center}
\par
\bigskip
\noindent 
{\small The zeros of the random Laurent series $1/\mu - \sum_{j=1}^\infty c_j/z^j$, where each $c_j$ is an independent standard complex
Gaussian, is known to correspond to  the scaled eigenvalues of a   particular additive rank 1 perturbation of a standard complex Gaussian matrix. 
For the corresponding random Maclaurin series obtained by the replacement  $z \mapsto 1/z$,
we show that these same zeros correspond to the scaled eigenvalues of a  particular
multiplicative rank 1 perturbation of a random unitary matrix. Since the correlation functions of the latter are known, by taking an appropriate limit the
correlation functions for the random Maclaurin series can be determined. Only for $|\mu| \to \infty$ is a determinantal point process
obtained. For the one and two point correlations, by  regarding the Maclaurin series as the limit of a random polynomial, a direct calculation can also
be given.
}

\section{Introduction}
The Kac polynomial refers to the random $N$-th degree polynomial with coefficients given by independent standard real Gaussians \cite{Ka43}.
The zeros form a two-component point process --- one component is the real zeros, and the other the complex zeros, with the latter
occurring in complex conjugate pairs. For finite $N$ the $k$-point correlation for the complex zeros is known \cite{Pr96} to be given
by a so-called semi-permanent (or Hafnian \cite{IKO05}) of a $2k \times 2k$ matrix, which bears the same relation to a permanent as a determinant does to
a Pfaffian. However the entries of the matrix depend on $k$, and only for $k=1$ have they been evaluated explicitly \cite{Pr96}.

This situation simplifies dramatically in the limit $N \to \infty$. Then the zeros form a two-component Pfaffian point process \cite{Fo10a,MS14}, with the underlying
(matrix) correlation kernel independent of the number of real and/ or complex eigenvalues being considered. Moreover, this Pfaffian point process is
identical to that formed by the real and complex eigenvalues of any $(N-1) \times (N-1)$ sub-block, formed by deleting any
one row and any one column, of a random real orthogonal matrix chosen with Haar measure in the limit $N \to \infty$ \cite{Fo10}.

The class of results just revised were first discovered for the complex version of the Kac polynomial, i.e.~the 
$n$-th degree polynomial with coefficients given by independent  standard complex Gaussians. Thus in the limit $n \to \infty$
 the zeros, which with probability one are complex and {\it do not} come in complex conjugate pairs, form a determinantal point
 process with correlation kernel \cite{PV03}
 \begin{equation}\label{1}
 K(w,z) = {1 \over \pi (1 - w \bar{z})^2}.
 \end{equation}
 Here it is assumed that the zeros have modulus less than one --- those with modulus greater than 1 are statistically independent.
 And moreover this determinantal point process is identical to that formed by the eigenvalues of 
 any $(N-1) \times (N-1)$ sub-block of a random complex unitary matrix chosen with Haar measure in the limit $N \to \infty$ \cite{Kr06}.
 For a recent application of this latter coincidence to the problem of persistence exponents, see \cite{PS18}.
 
 A different, and more general, coincidence between the distribution of the zeros of a random power series with coefficients independently distributed
 as standard real Gaussians,
 and that of the eigenvalues of a particular random matrix ensemble, has been given by Tao \cite{Ta12}. Let $X$ be an $N \times N$ matrix with standard
 real Gaussian entries, let ${\mathbf 1}_N$ denote a vector with all entries equal to 1, and let $\mathbf{g}$ denote an $N \times 1$ column vector of
 independent standard real Gaussian entries. It was shown in \cite{Ta12} (without the specialisation to Gaussian entries, and thus in more general
 circumstances) that the eigenvalues of the random matrix 
 \begin{equation}\label{Xg}
 {1 \over \sqrt{N}} \Big ( X + \mu {\mathbf 1}_N \mathbf{g}^T \Big ),
 \end{equation}
 itself being motivated by a model in the theory of neural networks \cite{RA06}, in the limit $N \to \infty$
 and for $N \to \infty$ , for $|z| > 1$ are given
 by the zeros of the random Laurent series
 \begin{equation}\label{L1a}
 {1\over \mu} - \sum_{j=1}^\infty {c_j \over z^j}.
 \end{equation} 
 Here each $c_j$ is an independent standard
 real Gaussian. Note that in the limit $|\mu| \to \infty$ this, in the variable $1/z = w$, reduces to seeking the zeros of the limiting Kac polynomial.
 Our interest in this paper is on the complex analogue of this result, and moreover on the specification of the corresponding point process. 
 
 It is noted in \cite[Remark 1.12]{Ta12} that choosing $X$ and $\mathbf{g}$ in (\ref{Xg}) to have independent standard complex Gaussian entries, rather than
 independent standard real Gaussian entries, the eigenvalues in the limit $N \to \infty$ and for $|z| > 1$ are given
 by the zeros of the random Laurent series (\ref{L1a}) with each $c_j$ an independent standard
 complex Gaussian. Our first result, Proposition \ref{P1} below, is to show that the random matrix $UA$, with $U$ chosen with Haar measure from
 the classical unitary group $U(N)$, and $A = {\rm diag} \, (a,1,\dots,1)$ has upon setting $a =1/ \mu \sqrt{N}$ and taking the limit $N \to \infty$ its
 eigenvalues given by the zeros of (\ref{L1a}), in the variable $w = 1/z$. This random matrix is a multiplicative rank 1 perturbation of $U$, and
 its eigenvalue probability density function (PDF) is known for finite $N$ \cite{Fy01,FS03}. Moreover, the general $k$-point correlation function can be
 computed explicitly; see Proposition \ref{P2}.
  It is given in terms of a sum of $k$ terms each dependent on a correlation kernel, although the point process itself is not 
 technically determinantal (i.e.~the $k$-point correlation is not given by the determinant of a correlation kernel which itself is independent of $k$.)
 In Section \ref{S3} we show how the correlation functions for the zeros of (\ref{L1a}) can be computed directly, and use this formalism to given
 independent derivations of the one and two point correlation functions.
 
 \section{The random matrix ensemble $UA$, for $U \in U(N)$, and $A = {\rm diag} \, (a,1,\dots,1)$}
 \subsection{Characterisation of the eigenvalues in the limit $N \to \infty$}
 Consider the random matrix $UA$, for  $U \in U(N)$ chosen with Haar measure (see e.g.~\cite[\S 5.2]{DF17} for a practical construction), and
 $A = {\rm diag} \, (a,1,\dots,1)$. For $\mathbf{\psi}$ a normalised eigenvector with eigenvalue $\lambda$, one has that $\mathbf{\psi}^\dagger A^\dagger A \mathbf{\psi} =
 | \lambda |^2$ and thus with $|a| < 1$ it must be that $|\lambda| < 1$. We assume henceforth that $|a| < 1$.  
  As remarked,  $UA$ is a multiplicative rank 1 perturbation of $U$.  
 It is a simple exercise to make use
 of this fact 
 to transform the characteristic polynomial for the eigenvalues of $UA$ to a form
 in which each coefficient can, in the limit $N \to \infty$, be determined explicitly.
 
 \begin{prop}\label{P1}
 Consider the random matrix $UA$ as specified above, and set $a = 1/ (\mu \sqrt{N})$.
 In the limit $N \to \infty$ the eigenvalues of $UA$ are given by the zeros of the random Laurent series (\ref{L1a})
 in the variable $\lambda = 1/z$, $|\lambda| < 1$, 
 with each $c_j$ an independent standard
 complex Gaussian.
 \end{prop}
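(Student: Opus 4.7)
The strategy is to exploit the rank-one structure of the perturbation and expand the characteristic polynomial as an explicit power series in $\lambda$ on the open unit disk. Writing $A = I + (a-1)e_1 e_1^T$ so that $UA = U + (a-1)Ue_1 e_1^T$, the matrix-determinant lemma gives
$$\det(\lambda I - UA) = \det(\lambda I - U)\Bigl[1 - (a-1)\,e_1^T(\lambda I - U)^{-1} U e_1\Bigr].$$
Using the identity $(\lambda I - U)^{-1}U = \lambda(\lambda I - U)^{-1} - I$ and setting $G(\lambda) := e_1^T(\lambda I - U)^{-1}e_1$, this becomes $\det(\lambda I - U)\bigl[a - (a-1)\lambda G(\lambda)\bigr]$. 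Because the eigenvalues of $U$ lie on the unit circle, $\det(\lambda I - U)\neq 0$ a.s.\ for $|\lambda|<1$; hence the eigenvalues of $UA$ inside the unit disk are exactly the solutions of $\lambda G(\lambda) = a/(a-1)$.

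Next, the Neumann expansion $(\lambda I - U)^{-1} = -\sum_{k\ge 0}\lambda^k (U^\dagger)^{k+1}$ (convergent for $|\lambda|<1$ since $U^\dagger$ is unitary) yields $\lambda G(\lambda) = -\sum_{k\ge 1}\lambda^k g_k$ with $g_k := [(U^\dagger)^k]_{11}$. The secular equation becomes $\sum_{k\ge 1} g_k\lambda^k = a/(1-a)$. Substituting $a=1/(\mu\sqrt{N})$ and multiplying by $\sqrt{N}$ puts it in the form
$$\sum_{k\ge 1}(\sqrt{N}\,g_k)\,\lambda^k \;=\; \frac{\sqrt{N}}{\mu\sqrt{N}-1}\;\xrightarrow[N\to\infty]{}\;\frac{1}{\mu},$$
which, via the change of variable $z=1/\lambda$, is precisely the vanishing condition for the Laurent series in (\ref{L1a}).

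The substantive step is to show that, jointly in $k\ge 1$, the coefficients $\sqrt{N}\,g_k = \sqrt{N}\,\overline{[U^k]_{11}}$ converge in distribution to an i.i.d.\ sequence of standard complex Gaussians. I would prove this by a moment computation via the Weingarten calculus on $U(N)$: the mixed moments $\mathbb{E}\!\bigl[\prod_i [(U^\dagger)^{k_i}]_{11}\prod_j\overline{[(U^\dagger)^{l_j}]_{11}}\bigr]$ expand as sums indexed by pairs of permutations weighted by Weingarten functions, and as $N\to\infty$ only the ``Wick pairings'' matching each $k_i$ with an equal $l_j$ survive at the leading order $N^{-p}$, exactly reproducing the moment structure of an i.i.d.\ complex Gaussian family. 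Invariance of the standard complex Gaussian law under conjugation allows identification of $\overline{c_k}$ with $c_k$ in the final statement.

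With the Gaussian limit in hand, the conclusion is packaged by Hurwitz's theorem: on each disk $|\lambda|\le r<1$ the random holomorphic functions $F_N(\lambda):=\sum_{k\ge 1}(\sqrt{N}g_k)\lambda^k - \sqrt{N}/(\mu\sqrt{N}-1)$ converge in distribution, uniformly on compacts, to $F_\infty(\lambda):=\sum_{k\ge 1}c_k\lambda^k - 1/\mu$, so the zero point processes converge on $\{|\lambda|<1\}$. \emph{The main obstacle} is the joint Gaussian limit of the entries $[(U^\dagger)^k]_{11}$: one must control the Weingarten expansion carefully enough to extract the leading order, rule out sub-leading contributions, and simultaneously obtain the tightness on compacts needed to justify the Hurwitz step; verifying the asymptotic independence across different powers $k$ is the key technical content.
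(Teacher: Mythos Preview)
Your argument is correct and follows the same overall strategy as the paper: reduce the characteristic equation to a scalar secular equation via the rank-one structure, expand as a power series in $\lambda$ on $|\lambda|<1$, and invoke the joint Gaussian limit of $\sqrt{N}\,[(U^\dagger)^k]_{11}$ together with a Hurwitz/tightness argument.

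There are two differences worth recording. First, your decomposition $A=I+(a-1)e_1e_1^T$ applied directly to $\det(\lambda I-UA)$ is cleaner than the paper's route, which instead passes through $A^{-1}$ and factors $\det(I_N-\lambda U^\dagger A^{-1})$ using $I_N'=\mathrm{diag}(0,1,\dots,1)$; as a result the paper's series involves $e_1^T(U^\dagger I_N')^kU^\dagger e_1$ and requires the extra approximation $e_1^T(U^\dagger I_N')^kU^\dagger e_1=e_1^T(U^\dagger)^{k+1}e_1\,(1+O(k/N))$, whereas you land directly on $[(U^\dagger)^k]_{11}$. Second, for the Gaussian limit the paper simply cites Krishnapur's result (their equation for $\sqrt{N}\,(e_1^TVe_1,e_1^TV^2e_1,\dots)$), while you propose to derive it via the Weingarten calculus; both are legitimate, and the paper likewise defers the tightness issue to the same reference. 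Your version thus buys a small algebraic simplification at the cost of reproving a known input.
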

 
 \begin{proof}
 Let $\mathbf{e}_1$ denote the $N \times 1$ column vector with 1 in the first entry, and 0's elsewhere, and 
 set $I_N' = {\rm diag} \,(0,1,\dots,1)$.
 Simple manipulation gives the rewrite for the characteristic polynomial of the matrix $UA$,
 \begin{equation}\label{Z1}
 \det ( \lambda  I_N - UA) =  (-1)^N \det U A \det ( I_N - \lambda U^\dagger A^{-1} ).
 \end{equation}
 Further simple manipulation shows that the final factor on the RHS of this expression can itself
 be factored according to
 \begin{equation}
 \det ( I_N - \lambda U^\dagger A^{-1} ) \\
 = \det (I_N - \lambda I_N' U^\dagger)
 \det \Big ( I_N - {\lambda \over a} (I_N - \lambda U^\dagger I_N')^{-1} U^\dagger \mathbf{e}_1 \mathbf{e}_1^T \Big ).
 \end{equation}
 Taking into consideration the above equalities, it follows that the condition for an eigenvalue of $UA$ can be written
 \begin{equation}\label{W}
   \det \Big ( I_N - {\lambda \over a} (I_N - \lambda U^\dagger I_N')^{-1} U^\dagger \mathbf{e}_1 \mathbf{e}_1^T \Big )  = 0.
\end{equation}

 Generally, for $C$ an $n \times m$ matrix and $D$ an $m \times n$ matrix, we have that (see e.g.~\cite[Exercises 5.2 q.2]{Fo10})
 $$
 \det (I_n + C D) = \det (I_m + D C).
 $$
 Applying this with $n=N$, $m=1$, $C= ( \lambda  I_N - UI_N')^{-1}\mathbf{e}_1$ and $D =  \mathbf{e}_1^T$ shows
 (\ref{W}) can be rewritten as the scalar equation
 $$
1 - {\lambda \over a}    \mathbf{e}_1^T  (  I_N -   \lambda U^\dagger I_N')^{-1}   U^\dagger \mathbf{e}_1 = 0.
$$
For $|\lambda| < 1$, the inverse can be expanded according to the geometric series, and we obtain
 \begin{equation}\label{Ue}
 0 = 1 - {\lambda \over a}  \sum_{k=0}^\infty \lambda^k  \mathbf{e}_1^T  (U^\dagger I_N')^k U^\dagger \mathbf{e}_1.
 \end{equation}
 
 Now it is a known result \cite{Kr06} that for $V \in U(N)$ chosen with Haar measure, in the limit $N \to \infty$
  \begin{equation}\label{Va} 
  \sqrt{N} ( \mathbf{e}_1^T V \mathbf{e}_1, \mathbf{e}_1^T V^2 \mathbf{e}_1,  \mathbf{e}_1^T V^3 \mathbf{e}_1, \dots )
  \mathop{=}^{\rm d} (\alpha_1,\alpha_2,\alpha_3,\dots ),
  \end{equation}
  where each $\alpha_i$ is an independent standard complex Gaussian. To make use of this, 
set $a = 1/ (\mu \sqrt{N})$ and  
  note that
 \begin{equation}\label{Va1} 
\mathbf{e}_1^T (U^\dagger I_N')^k U^\dagger \mathbf{e}_1 =
 \mathbf{e}_1^T (U^\dagger)^{k+1}  \mathbf{e}_1 \Big ( 1 + {\rm O}(k/N) \Big ).
  \end{equation}
 Up to the technical issue of tightness (see the discussion
 in \cite[proof of Th. 4.3.15, pg.~112]{HKPV08}), and its resolution in
 \cite{Kr06}) use of (\ref{Va1}) and (\ref{Va}) together
  in (\ref{Ue}) implies that in the limit $N \to \infty$ the RHS of the latter reduces to the stated random Laurent
  polynomial.
\end{proof}

\section{Correlations for the eigenvalues of the limiting random matrices $UA$}
\subsection{The eigenvalue density}
Let $G = {\rm diag} \, (g_1, g_2, \dots, g_N)$ with each $g_i \ge 0$.
With $U \in U(N)$ chosen with Haar measure, 
Wei and Fyodorov \cite{WF08} have given an explicit formula for the eigenvalue density of
$U \sqrt{G}$.
While this is quite complicated in general, in the special case $G = {\rm diag} \, (|a|^2, 1, \dots, 1)$
corresponding to the random matrix $UA$ as specified in the Introduction, this simplifies to read
 \cite[eq.(2.12)]{WF08}
 \begin{equation}\label{B1}
 \rho_{(1)}(z) = {(|z|^2 - |a|^2)^{N-2} \over (1 - |a|^2)^{N-1} |z|^{2N}}
 \Big ( (N-1) (|z|^{2N} + |a|^2) +
 \sum_{k=0}^{N-2} \Big ( (N-2-k) + k |a|^2 \Big ) |z|^{2(N-1-k)} \Big ),
 \end{equation}
 where $1 > |z| > |a|$.
 
 \begin{prop}
 Set $a = 1/(\mu \sqrt{N})$.
In the limit $N \to \infty$ we have
\begin{equation}\label{R1}
\rho_{(1)}(z) = {1 \over \pi} {1 \over (1 - |z|^2)^2} \exp \Big ( {1 \over  |\mu|^2} \Big ( 1- {1 \over |z|^2} \Big ) \Big )
\Big (1 +{1 \over  |\mu|^2  |z|^4} (1 - |z|^2) \Big ),
\end{equation}
restricted to $|z|<1$. 
\end{prop}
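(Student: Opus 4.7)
The plan is to substitute the scaling $|a|^2 = 1/(|\mu|^2 N)$ directly into the explicit finite-$N$ density (\ref{B1}) and perform the asymptotic analysis term by term. Writing $\alpha = 1/|\mu|^2$ so that $|a|^2 = \alpha/N$, every factor reduces either to an elementary Poisson-type limit of the form $(1-x/N)^N \to e^{-x}$ or to a geometric sum.

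First I would handle the $N$-dependent prefactor, rewriting
\[
\frac{(|z|^2-|a|^2)^{N-2}}{(1-|a|^2)^{N-1}|z|^{2N}} = \frac{1}{|z|^4}\cdot\frac{(1-\alpha/(N|z|^2))^{N-2}}{(1-\alpha/N)^{N-1}},
\]
so that in the limit $N\to\infty$ it tends pointwise (for $|z|<1$) to $|z|^{-4}\exp(\alpha(1-1/|z|^2))$.

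Next I would treat the bracket. The change of summation index $j = N-1-k$ converts $\sum_{k=0}^{N-2}((N-2-k)+k|a|^2)|z|^{2(N-1-k)}$ into $\sum_{j=1}^{N-1}((j-1)+(N-1-j)\alpha/N)|z|^{2j}$, while the outside contribution $(N-1)|z|^{2N}$ is negligible for $|z|<1$. The piece $\sum_{j=1}^{N-1}(j-1)|z|^{2j}$ is monotone and converges to $|z|^4/(1-|z|^2)^2$ via $\sum_{m\ge 0} m x^m = x/(1-x)^2$. The piece $(\alpha/N)\sum_{j=1}^{N-1}(N-1-j)|z|^{2j}$ reduces, after using $(N-1)/N\to 1$ and observing $(\alpha/N)\sum_{j\ge 1}j|z|^{2j}\to 0$, to $\alpha |z|^2/(1-|z|^2)$. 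Combined with the surviving contribution $(N-1)\alpha/N \to \alpha$ from $(N-1)|a|^2$, the bracket tends to
\[
\frac{|z|^4}{(1-|z|^2)^2}+\frac{\alpha}{1-|z|^2}=\frac{|z|^4+\alpha(1-|z|^2)}{(1-|z|^2)^2}.
\]

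Multiplying the limiting prefactor by the limiting bracket and restoring $\alpha=1/|\mu|^2$ reproduces the stated formula (\ref{R1}); the overall factor of $1/\pi$ arises from the standard normalisation of the density with respect to Lebesgue measure $d^2z$ on $\mathbb{C}$ (absent from (\ref{B1}) as written but present in (\ref{1})). There is no genuinely hard step here — the obstacle is mainly careful bookkeeping: one must track several separate $(1-|z|^2)^{-1}$ contributions that only combine after invoking $1+|z|^2/(1-|z|^2)=1/(1-|z|^2)$, and one must apply the two Poisson limits with exponents $N-2$ and $N-1$ consistently. Passing to the limit inside the geometric sums is justified by dominated convergence, which is immediate since $|z|<1$.
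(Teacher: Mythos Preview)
Your argument is correct and follows essentially the same route as the paper: factor the prefactor to expose the limits $(1-\alpha/(N|z|^2))^{N-2}\to e^{-\alpha/|z|^2}$ and $(1-\alpha/N)^{N-1}\to e^{-\alpha}$, then reduce the bracketed sum to the elementary limiting value $|z|^4/(1-|z|^2)^2+\alpha/(1-|z|^2)$ and multiply. The only cosmetic difference is that the paper first recognises the sum $\sum_{k=0}^{N-2}((N-2-k)+k|a|^2)|z|^{2(N-1-k)}$ as the closed form $|z|^2(1-|a|^2)\,\partial_{|z|^2}\frac{1-|z|^{2(N-1)}}{1-|z|^2}+|a|^2(N-2)\frac{|z|^2(1-|z|^{2(N-1)})}{1-|z|^2}$ before letting $N\to\infty$, whereas you re-index and pass to the limit termwise; both land on the same three contributions $\alpha$, $|z|^4/(1-|z|^2)^2$, $\alpha|z|^2/(1-|z|^2)$, combined via $1+|z|^2/(1-|z|^2)=1/(1-|z|^2)$. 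Your remark about the $1/\pi$ is accurate: it is a normalisation convention carried by (\ref{B1}) (and the paper's proof equally takes it as given rather than deriving it in the limit).
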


\begin{proof}
The restriction to $|z|<1$ follows immediately from the restriction on $|z|$ noted below (\ref{B1}).

The summation over $k$ in (\ref{B1}) can be evaluated according to
$$
|z|^2 (1 - |a|^2) {\partial \over \partial  |z|^2}
{1 - |z|^{2(N-1)} \over 1 - |z|^2} + |a|^2 (N-2) {|z|^2 (1 - |z|^{2(N-1)} ) \over 1 - |z|^2}.
$$
Hence with $|z| < 1$ and fixed, uniformly in $|a|<1$,
\begin{equation}\label{SR}
\rho_{(1)}(z) \mathop{\sim}_{N \to \infty}
{(|z|^2 - |a|^2)^{N-2} \over (1 - |a|^2)^N |z|^{2N}} \bigg (
N |a|^2 + |z|^4 (1 - |a|^2)
{\partial \over \partial |z|^2} {1 \over 1 - |z|^2} + |a|^2 N
{|z|^2 \over 1 - |z|^2} \bigg ).
\end{equation}
We write $(|z|^2 - |a|^2)^{N-2}  = |z|^{2(N-2)} (1 - |a|^2/|z|^2)^{N-2}$.
Upon this manipulation, and with $|a| =  1/(|\mu| \sqrt{N})$,
the elementary limit $(1-u/N)^N \to e^{-u}$ as
$N \to \infty$ establishes
(\ref{R1}).
\end{proof}

\begin{remark}\label{Re1}
We see from (\ref{R1}) that the density vanishes at the rate of an essential singularity as $|z| \to 0$. Another feature is
that as $|z| \to 1^-$, the leading form of $\rho_{(1)}(z)$ is independent of $\mu$.
\end{remark}

\subsection{The $k$-point correlation}
Prior to the derivation of (\ref{B1}), the general $k$-point correlation function for the eigenvalues
of the random matrix $UA$, $A = {\rm diag} \,(|a|,1,\dots,1)$ with $|a|<1$ was calculated by Fyodorov
\cite{Fy01} (see also \cite{FS03}), up to a proportionality. The starting point was the formula for the
joint eigenvalue distribution
$$
(1 - |a|^2)^{1-N}  \delta \Big ( |a|^2 - \prod_{l=1}^N | z_l|^2 \Big ) \prod_{l=1}^N \chi_{|z_l|<1}
\prod_{1 \le j < k \le N} | z_j - z_k|^2,
$$
where $\chi_J = 1$ for $J$ true and 0 otherwise, valid up to an $N$ dependent normalisation.

To present the result for the corresponding $k$-point correlation, define
\begin{equation}\label{q}
q_j(z_1,\dots,z_k) = [s^j] \det \Big [ \Big (s + x {d \over dx} \Big ) {x^N - 1 \over x - 1} \Big |_{x = z_i \bar{z}_j}
\Big ]_{i,j=1}^k,
\end{equation}
where $[s^j]$ denotes the coefficient of $s^j$ in the expression that follows.
Then, up to a proportionality $c_{N,k}$ say, which may depend on $N,k$,  we have from \cite{Fy01,FS03} that
\begin{multline}\label{12}
\rho_{(k)}(z_1,\dots,z_k) = c_{N,k} (1 - |a|^2)^{1-N}
\chi_{\prod_{l=1}^k | z_l|^2 \ge |a|^2} \prod_{l=1}^k \chi_{|z_l| < 1} \\
\times
\sum_{l=0}^k q_l(z_1,\dots,z_k) \Big ( {d \over dx} x \Big )^l
\Big [ {1 \over x} \Big ( 1 - {|a|^2 \over x} \Big )^{N-1} \Big ] \Big |_{x = \prod_{l=1}^k |z_l|^2}.
\end{multline}
To determine $c_{N,k}$ we consider the case $a=0$. The only nonzero term in 
(\ref{12}) is then $l=0$, telling us that
\begin{equation}\label{qA}
\rho_{(k)}(z_1,\dots,z_k) \Big |_{a=0} = c_{N,k}
\prod_{l=1}^k \chi_{|z_l| < 1} 
\det \Big [  {d \over dx} {x^N - 1 \over x - 1} \Big |_{x = z_i \bar{z}_j} \Big ].
\end{equation}
This is the result for the eigenvalue $k$-point correlation of the ensemble formed
by deleting one row and one column from $U \in U(N)$ chosen with Haar measure,
first derived in \cite{ZS99}. Moreover, the latter contains the explicit form of the proportionality,
which we read off to be
\begin{equation}\label{q1}
c_{N,k} = {1 \over \pi^k},
\end{equation}
and in particular is independent of $N$.
In the case $k=1$ we can check that with (\ref{q1}) substituted in (\ref{12}), (\ref{B1})
is reclaimed.

Our specific interest is in the limiting form of (\ref{12}) with $a = 1/(\mu \sqrt{N})$.

\begin{prop}\label{P2}
Consider (\ref{12}) with the substitution (\ref{q}), and scale $a$ to depend on $N$ according to
$a = 1/(\mu \sqrt{N})$. Let
\begin{equation}\label{Q}
Q_j(z_1,\dots,z_k) = [s^j] \det \Big [ \Big (s + x {d \over dx} \Big ) {1 \over 1 - x} \Big |_{x = z_i \bar{z}_j}
\Big ]_{i,j=1}^k
\end{equation}
(cf.~(\ref{q})).
In the limit $N \to \infty$ we have
\begin{equation}\label{R2a}
\rho_{(k)}(z_1,\dots,z_k) = { e^{1 / |\mu|^2}  \over \pi^k} 
\sum_{l=0}^k Q_l(z_1,\dots,z_k) \Big ( {d \over dx} x \Big )^l
\Big ( {1 \over x} e^{- 1/ ( |\mu|^2 x )} \Big ) \Big |_{x = \prod_{l=1}^k |z_l|^2}.
\end{equation}
restricted to $|z_j|<1$ ($j=1,\dots,k$).
\end{prop}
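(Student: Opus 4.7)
The plan is to start from the finite-$N$ formula (\ref{12}) with the normalisation $c_{N,k}=1/\pi^k$ established in (\ref{q1}), substitute $|a|^2=1/(|\mu|^2 N)$, and take the limit $N\to\infty$ factor by factor. Three substitutions are elementary consequences of $(1-u/N)^N\to e^{-u}$: the prefactor $(1-|a|^2)^{1-N}\to e^{1/|\mu|^2}$; the factor $(1-|a|^2/x)^{N-1}\to e^{-1/(|\mu|^2 x)}$ evaluated at $x=\prod_l|z_l|^2$, which is strictly positive generically; and the indicator $\chi_{\prod|z_l|^2\ge|a|^2}$ collapses to $1$ for any fixed configuration with all $|z_l|>0$, while $\prod_l\chi_{|z_l|<1}$ is unchanged.

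Next I would treat $q_j\to Q_j$ by noting that inside the determinant the argument is $x=z_i\bar z_j$ with $|x|\le\max_l|z_l|^2<1$. For such $x$, the geometric series identity gives $(x^N-1)/(x-1)\to 1/(1-x)$ uniformly on compact subsets of the open unit disc. Since these are holomorphic functions of $x$, uniform convergence on compacta transfers to the operator $\bigl(s+x\,d/dx\bigr)$ applied to them, hence to every entry of the $k\times k$ matrix in (\ref{q}). The determinant is a polynomial in its entries, and $[s^j]$ extracts a polynomial coefficient in $s$, so $q_j(z_1,\dots,z_k)\to Q_j(z_1,\dots,z_k)$.

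Finally I would deal with the outer differential operator $\bigl((d/dx)x\bigr)^l$ acting on $x^{-1}(1-|a|^2/x)^{N-1}$. On any compact subset of $(0,1)$ the function $(1-|a|^2/x)^{N-1}$ converges to $e^{-1/(|\mu|^2 x)}$ uniformly, and both sides are real-analytic in $x$, so the Cauchy formula (applied on a small complex disc about the positive evaluation point) transfers the convergence to the derivatives at $x=\prod_l|z_l|^2$. Multiplying and summing the six limiting factors over $l=0,1,\dots,k$ then reproduces (\ref{R2a}) verbatim.

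The main obstacle is the rigorous justification of these interchanges of limit with differentiation, both inside the determinantal entries defining $q_j$ and in the outer operator $\bigl((d/dx)x\bigr)^l$. Each is routine given uniform convergence of holomorphic functions on compacta, but requires that one work in a small neighbourhood of the evaluation point that avoids $x=0$ (harmless, since $\prod_l|z_l|^2>0$) and stays strictly inside the unit disc (ensured by $|z_i\bar z_j|<1$ and $\prod_l|z_l|^2<1$). A short remark to this effect is all that is needed to close the argument.
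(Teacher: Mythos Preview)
Your proposal is correct and follows essentially the same approach as the paper: both rely on the elementary limit $(1-u/N)^N\to e^{-u}$ applied to the factors of (\ref{12}). The paper's own proof is in fact a single sentence asserting the result is ``immediate'' from this limit, whereas you have supplied the careful justification (uniform convergence on compacta, transfer to derivatives via Cauchy's formula, handling of the indicator functions) that the paper omits.
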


\begin{proof}
This is immediate from (\ref{12}), and the elementary limit $(1-u/N)^N \to e^{-u}$ as
$N \to \infty$

\end{proof}

Note that (\ref{R2a}) does not correspond to a determinantal point process, as it is not of
the form $\det [K(x_i,x_j)]_{i,j=1}^k$ for some function $K(x,y)$ independent of $k$.
However it does approach a determinantal point process in the limit $|\mu| \to \infty$.
Then, only the term $l=0$ contributes to the sum in (\ref{R2a}), showing that
\begin{equation}\label{R3}
\lim_{|\mu| \to \infty} \rho_{(k)}(z_1,\dots,z_k) = {1 \over \pi^k} \det
\Big [ {d \over dx}  {1 \over 1 - x} \Big |_{x = z_i \bar{z}_j}
\Big ]_{i,j=1}^k
\end{equation}
This is just the result (\ref{1}) for the correlations of the zeros of the limiting random complex
Kac polynomial, as is consistent with $\{z_j\}$, before the limit $\mu \to \infty$,
 corresponding to the solutions of the
equation $1/\mu = \sum_{j=1}^\infty c_j z^j$ with $c_j$ standard complex Gaussians.

In the case $k=1$, (\ref{R2a}) reduces to (\ref{R1}). For future reference, we note too that
for $k=2$, (\ref{R2a}) gives
\begin{equation}\label{L1}
\rho_{(2)}(z,w) = {e^{(1/|\mu|^2)(1 - 1/ \prod_{l=1}^2 | zw|^2)}  \over \pi^2} \bigg ( {1 \over  | zw|^2} Y_0(z,w) 
+ {Y_1(z,w) - Y_2(z,w) \over |\mu|^2 |zw|^4} + {Y_2(z,w) \over |\mu|^4 |zw|^6} \bigg ),
\end{equation}
with
\begin{align}\label{L2}
Y_0(z,w) & =  |zw|^2 \bigg ( {1 \over (1 - |z|^2)^2}{1 \over (1 - |w|^2)^2} -  {1 \over (1 - z \bar{w})^2} {1 \over (1 - \bar{z} w)^2}  \bigg )
\nonumber  \\
Y_1(z,w) & = {|w|^2 \over (1 - |z|^2)(1 - |w|^2)^2} +{ |z|^2 \over (1 - |z|^2)^2(1 - |w|^2)} -
{1 \over |1 - z \bar{w}|^2} \Big ( {w \bar{z} \over 1 - w \bar{z}} + {z \bar{w} \over 1 - z \bar{w}} \Big ) \nonumber \\
Y_2(z,w) & ={|z-w|^2 \over (1 - |z|^2) (1 - |w|^2) |1 - z \bar{w} |^2}
\end{align}
(in obtaining the formula for $Q_2$, use has been made of the Cauchy double alternant determinant formula;
see e.g.~\cite[eq,~(4.34)]{Fo10}).

\begin{remark}
Generally the truncated (or connected) $k$-point correlation is defined by the formula
\begin{equation}\label{T1}
\rho_{(k)}^T(x_1,\dots,x_k) = \sum_{m=1}^k \sum_G (-1)^{m-1} (m-1)! \prod_{j=1}^m \rho_{(|G_j|)}(x_{g_j(1)},
\dots, x_{g_j(|G_j|)}),
\end{equation}
where the sum over $G$ is over all subdivisions of $\{1,2,\dots,k\}$ into $m$ subset $G_1,\dots,G_m$,
with $G_j = \{g_j(1),\dots, g_j(|G_j|) \}$. For example
\begin{align*}
\rho_{(2)}^T(x,y) & = \rho_{(2)}(x,y) - \rho_{(1)}(x) \rho_{(1)}(y) \\
\rho_{(3)}^T(x,y,z) & = \rho_{(3)}(x,y,z) - \rho_{(1)}(x) \rho_{(2)}(y,z)  - \rho_{(1)}(y) \rho_{(2)}(x,z) \\
& \quad -  \rho_{(1)}(z) \rho_{(2)}(x,y) + 2  \rho_{(1)}(x) \rho_{(1)}(y) \rho_{(1)}(z) .
\end{align*}
In the case of a determinantal point process with correlation kernel $K(x,y)$, 
\begin{equation}\label{T2}
\rho_{(k)}^T(x_1,\dots,x_k) = (-1)^{k-1} 
\sum_{{\rm cycles} \atop {\rm length} \,k}
K_N(x_{i_1},x_{i_2}) K_N(x_{i_2},x_{i_3}) \cdots
K_N(x_{i_k},x_{i_1}),
\end{equation}
where the sum is over all distinct cycles $i_1 \to i_2 \to \cdots \to
i_k \to i_1$ of $\{1,\dots,k\}$ which are of length $k$ (see e.g.~\cite[Prop.~5.1.2]{Fo10}). Moreover, if the correlation kernel
has the reproducing property
$$
\int_{\Omega} K(w_1,z) K(z,w_2) \, dz^{\rm r} dz^{\rm i} = K(w_1,w_2),
$$
(here $z^{\rm r}, z^{\rm i}$ denotes the real and imaginary parts of $z$)
as is true in the case of (\ref{1}) for example with $\Omega = D_1$ the domain $|z| < 1$,  then
the structure (\ref{T2}) implies the fundamental sum rule
\begin{equation}\label{T3}
\int_{D_1} \rho_{(k+1)}^T(z_1,\dots,z_k,z) \, dz^{\rm r} dz^{\rm i} = - k  \rho_{(k)}^T(z_1,\dots,z_k)
\end{equation}
(see e.g.~\cite[eq.~(14.12)]{Fo10}).

One can ask if (\ref{T3}) holds in the case of (\ref{R2a}). Not being
determinantal, this appears difficult to answer, even in the case of $k=1$ when we have the explicit functional forms
(\ref{L1}) and (\ref{R1}).
\end{remark}

\begin{remark}
The zero with the smallest modulus is a natural observable quantity. Let $E(0;r)$ denote the probability that the
smallest modulus is greater than or equal to $r$, and denote by $D_r$ the disk of radius $r$.
We know that $E(0;r)$ is given in terms of the correlation functions by (see e.g.~\cite[eq.~(9.4)]{Fo10})
\begin{equation}\label{E}
E(0;r) = 1 + \sum_{k=1}^\infty {(-1)^k \over k!} \int_{(D_r)^k} dz_1^{\rm r} dz_1^{\rm i} \cdots
dz_k^{\rm r} dz_k^{\rm i} \, \rho_{(k)}(z_1,\dots,z_k).
\end{equation}
In the case $|\mu| \to \infty$, when $\rho_{(k)}$ is a $k \times k$ determinant with correlation kernel (\ref{1}),
$E(0;r)$ is given by the Fredholm determinant (see e.g.~\cite[eq.~(9.15)]{Fo10})
\begin{equation}\label{EK}
E(0;r) = \det (  I - \mathbb K_r),
\end{equation}
where $\mathbb K_r$ is the integral operator on $D_r$ with kernel (\ref{1}). The nonzero eigenvalues of the latter are
$\{(2p+2)\}_{p=0}^\infty$, corresponding to the eigenfunctions $\bar{z}^p$ and so \cite{PV03} (see also \cite{Bu18})
\begin{equation}\label{E2}
E(0;r) \Big |_{| \mu | \to \infty} = \prod_{p=0}^\infty (1 - r^{2p+2}).
\end{equation}

For general $| \mu |$ there is no analogue of (\ref{E2}), but some asymptotic results are possible. First, in the limit
$r \to 0^+$, it follows from (\ref{E}) and (\ref{R1}) that
\begin{equation}\label{EK1}
E(0;r) \mathop{\sim}\limits_{r \to 0^+} 1 - {2 \over |\mu|^2} \int_0^r s^{-3} e^{-1/s^2|\mu|^2} \, ds = 1 - e^{-1/(r | \mu|)^2}.
\end{equation}
Second, for $r \to 1^-$, in keeping with the final sentence of Remark \ref{Re1}, we might expect that the leading form of
$E(0;r)$ is independent of $|\mu|$, and is thus that implied by (\ref{E2}) \cite{PV03},
\begin{equation}\label{EK1}
E(0;r)  \mathop{\sim}\limits_{r \to 1^-}  e^{-\pi^2/(12(1-r))}.
\end{equation}
\end{remark}

\section{One and two-point correlations for the zeros of the limiting Kac polynomial}\label{S3}
With $z \mapsto 1/z$, the Laurent series (\ref{L1}) becomes a Maclaurin series.
Through the relation of Proposition \ref{P1}, the results of the previous section give
the correlation functions for the equation
\begin{equation}\label{L2a}
0 =  {1\over \mu} - \sum_{j=1}^\infty {c_j z^j},
 \end{equation} 
 with each $c_j$ is a standard complex Gaussian.
Truncating the latter at the $N$-th term gives the random polynomial
\begin{equation}\label{L2}
 {1\over \mu} - \sum_{j=1}^N {c_j z^j}.
 \end{equation} 
Defining now $p_N(z) = \sum_{j=1}^N {c_j z^j}$,
 our interest is this section is in the statistical properties of the
 solutions of the polynomial equation 
 \begin{equation}\label{L2b}
 p_N(z) = 1/\mu.  
  \end{equation} 
 Taking the limit
 $N \to \infty$ must then reclaim the results of the previous section. We
 are able to carry out this program for the one and two-point correlations.
 
 With $c_j^{\rm r}$ and $c_j^{\rm i}$ denoting the real and imaginary parts of $c_j$ respectively. The probability
 measure of the coefficients  of $p_N(z)$ is then
 \begin{equation}\label{11.92a}
 \Big ( {1 \over \pi}  \Big )^N \exp \Big ( - \sum_{j=1}^N  |c_j|^2 \Big )
 \prod_{j=1}^N  d c_j^{\rm r} d c_j^{\rm i}.
 \end{equation}
 We take up the task of specifying the $k$-point correlation function for solutions of (\ref{L2b}).
 
 Following \cite{Ha96},
 the first step is to introduce complex numbers $z_1^{(0)}, \dots,
z_k^{(0)}$ $(k < (N+1)/2)$ and to define $2k$ linear combinations
of the coefficients $\{ c_j \}_{j=1}^N$ by
$p(z_l^{(0)}) =: p_l$ and $p'(z_l^{(0)}) =: p_l'$ 
$(l=1,\dots,k)$, where the prime denotes differentiation. General
properties of the Gaussian distribution (see e.g.~\cite[Exercises 15.3 q.3]{Fo10}
give that in terms of these $2k$ complex variables (with the
other $N+1-2k$ complex variables integrated out), the probability
measure (\ref{11.92a}) reduces to
\begin{equation}\label{A}
\Big ( {1 \over  \pi} \Big )^{2k} {1 \over \det M}
\exp \Big ( -  (\mathbf{p}, \mathbf{p}{\, '})^\dagger  M^{-1}
(\mathbf{p}, \mathbf{p}\,') \Big ) \prod_{l=1}^k dp_l^{\rm r} dp_l^{\rm i}
dp_l'^{\rm r} dp_l'^{\rm i},
\end{equation}
where $M$ is the covariance matrix
\begin{equation}\label{ABC}
 M = \left [ \begin{array}{cc}
A &  B \\  B^\dagger &  C \end{array} \right ],
\: \:  A := [\langle p_j \bar{p}_l \rangle]_{j,l=1,\dots,k}, \: \:
 B := [\langle p_j \bar{p}_l' \rangle]_{j,l=1,\dots,k}, \: \:
C := [\langle p_j' p_l' \rangle]_{j,l=1,\dots,k}.
\end{equation}
The averages specifying the matrix elements of $A,B,C$ are all with
respect to the coefficients $\{ c_j \}_{j=1}^N$.

For given points $\{z_j^{(0)}\}_{j=1}^k$ (\ref{A})
gives the p.d.f.~for the corresponding values of
$\{p_j\}_{j=1}^k$ and 
$\{p'_j \}_{j=1}^k$. The next step is to change variables so that
the points $\{z_j^{(0)}\}_{j=1}^k$ replace the function values
$\{p_j \}_{j=1}^k$ as the variables. Since $p(z) = p_j + (z - z_l^{(0)}) p_j' +
{\rm O}(( (z - z_l^{(0)})^2)$, the
Jacobian for each such
change of variables equals $|p_l'|^2$. Hence (\ref{A})
transforms to
$$
\Big ( {1 \over  \pi} \Big )^{2k}    {1 \over \det  M}
\exp \Big ( -   (\mathbf{p}, \mathbf{p}{\, '})^\dagger { M}^{-1}
(\mathbf{p}, \mathbf{p}\,') \Big ) \prod_{l=1}^k |p_l'|^2
dz_l^{\rm r} dz_l^{\rm i}
{dp_l'}^{\rm r} {dp_l'}^{\rm i}.
$$

A crucial point is that this last 
change of variables is only locally one to one, as there
will in general be $N$ points giving the same function value. This
shows that if we set $\mathbf{p} = {1 \over \mu} \mathbf{1}_k$,  and integrate over
each $p_l'^{\rm r}, p_l'^{\rm i}$ the $k$-point correlation function for the
$N$ complex solutions of $p_N(z) = 1/\mu$
will result, giving
\begin{multline}\label{11.zero}
\rho_{(k)}(z_1^{(0)}, \dots, z_k^{(0)}) = \\
\Big ( {1 \over  \pi } \Big )^{2k} {1 \over \det  M}
 \int_{(-\infty,\infty)^{2k}} 
\prod_{l=1}^k
dp_l'^{\rm r} dp_l'^{\rm i} \,
 |p_l'|^2 \exp \Big ( -   ({1 \over \mu} \mathbf{1}_k, \mathbf{p}{\, '})^\dagger { M}^{-1}
({1 \over \mu} \mathbf{1}_k, \mathbf{p}\, ') \Big ),
\end{multline}
where $({1 \over \mu} \mathbf{1}_k, \mathbf{p}\, ')$ denotes the $2k \times 1$ column vector
obtained by concatinating $ \mathbf{1}_k$ and $ \mathbf{p}'$
(cf.~\cite[eq.~(15.56)]{Fo10}). This can be viewed as a generalisation of the Kac-Rice formula; see e.g.~\cite{AT07}.

In the case $1/\mu = 0$ the multiple integral (\ref{11.zero}) can be written in terms of a permanent \cite{Ha96},
although the underlying matrix depends on $k$, making its explicit evaluation difficult except for small $k$
(or, as it turns out, large $N$). The permanent structure breaks down when $1/\mu \ne 0$, but nonetheless 
exact computation is still possible for small $k$; we restrict attention to $k=1$ and $k=2$.
Analogous to the strategy for the case $1/\mu = 0$,
we begin by eliminating the integrations in (\ref{11.zero}) in favour of differentiations.

\begin{prop}
Write
\begin{equation}\label{NN}
 M^{-1} = \left [ \begin{array}{cc}
N_1 &  N_2 \\  N_2^\dagger &  N_3 \end{array} \right ],
\end{equation}
where each $N_i$ is of size $k \times k$ (cf.~(\ref{ABC})). Introduce the auxiliary vector $\mathbf{\nu} = (\nu_1,\dots, \nu_k)$,
where each $\nu_j$ is a complex number, and let $\bar{\pmb \nu}$ denote its complex conjugate.

The multiple integral formula (\ref{11.zero}) can be rewritten
\begin{multline}\label{SM1}
\rho_{(k)}(z_1^{(0)}, \dots, z_k^{(0)}) = \Big ( {1 \over  \pi} \Big )^k {1 \over \det A} e^{- {1 \over \overline{\mu}} \mathbf{1}_k^T N_1 {1 \over \mu} \mathbf{1}_k}
{\partial^k \over \partial \bar{\nu}_1 \cdots  \partial \bar{\nu}_k} 
\prod_{l=1}^k \Big (  (  \bar{\pmb \nu} + i {1 \over \bar{\mu}} {\mathbf 1}_k^T N_2) \cdot N_3^{-1} \mathbf{e}_l \Big ) \\
\times \exp \Big (  (  \bar{\pmb \nu} + i {1 \over \overline{\mu}} \mathbf{1}_k^T N_2) \cdot N_3^{-1} ({\pmb \nu} - i N_2^\dagger {1 \over \mu}  \mathbf{1}_k \Big ) \Big |_{\mathbf{\nu} = \mathbf{0}}.
\end{multline}
\end{prop}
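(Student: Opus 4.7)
The plan is to carry out the complex Gaussian integration over $\mathbf{p}'$ in (\ref{11.zero}) explicitly, handling the $\prod_{l=1}^k |p_l'|^2$ insertions through an auxiliary source-and-differentiate trick. Using the block decomposition (\ref{NN}) of $M^{-1}$ and abbreviating $\mathbf{b} := N_2^\dagger (\tfrac{1}{\mu}\mathbf{1}_k)$, the quadratic form $( \tfrac{1}{\mu}\mathbf{1}_k, \mathbf{p}')^\dagger M^{-1}(\tfrac{1}{\mu}\mathbf{1}_k, \mathbf{p}')$ separates into a $\mathbf{p}'$-independent piece $\tfrac{1}{|\mu|^2}\mathbf{1}_k^T N_1 \mathbf{1}_k$, cross terms $\mathbf{b}^\dagger \mathbf{p}' + \mathbf{p}'^\dagger \mathbf{b}$, and the quadratic piece $\mathbf{p}'^\dagger N_3 \mathbf{p}'$. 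The constant piece pulls out as the prefactor $e^{-\tfrac{1}{|\mu|^2}\mathbf{1}_k^T N_1 \mathbf{1}_k}$ appearing in (\ref{SM1}).

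To manufacture $\prod_l p_l' \bar p_l'$ via differentiation, I would insert two independent source exponentials, $\exp\!\big(i\bar{\pmb\nu}^T \mathbf{p}' + i\mathbf{p}'^\dagger \pmb\eta\big)$, so that each $p_l'$ is produced by $(-i)\partial_{\bar\nu_l}$ and each $\bar p_l'$ by $(-i)\partial_{\eta_l}$, all evaluated at $\pmb\nu = \pmb\eta = 0$. Swapping differentiation with integration reduces the $\mathbf{p}'$-integral to an asymmetric-source Gaussian, evaluated by
\[
\int e^{-\mathbf{p}'^\dagger N_3 \mathbf{p}' + \mathbf{j}_1^\dagger \mathbf{p}' + \mathbf{p}'^\dagger \mathbf{j}_2}\, d^{2k}\mathbf{p}' = \frac{\pi^k}{\det N_3}\, e^{\mathbf{j}_1^\dagger N_3^{-1} \mathbf{j}_2}
\]
(valid for independent $\mathbf{j}_1, \mathbf{j}_2$), applied with $\mathbf{j}_1^\dagger = i\bar{\pmb\nu}^T - \mathbf{b}^\dagger$ and $\mathbf{j}_2 = i\pmb\eta - \mathbf{b}$.

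The crucial structural observation — explaining why only $k$ derivatives in $\bar{\pmb\nu}$ survive in (\ref{SM1}), rather than the naive $2k$ — is that the post-integration exponent $(i\bar{\pmb\nu}^T - \mathbf{b}^\dagger) N_3^{-1}(i\pmb\eta - \mathbf{b})$ is linear in $\pmb\eta$, so $\prod_l \partial_{\eta_l}$ can be performed at once and closes up into a multiplicative polynomial: each $\partial_{\eta_l}$ brings down the $\pmb\eta$-independent factor $i(i\bar{\pmb\nu}^T - \mathbf{b}^\dagger) N_3^{-1} \mathbf{e}_l$. Rewriting $i\bar{\pmb\nu}^T - \mathbf{b}^\dagger = i(\bar{\pmb\nu} + i\mathbf{b}^\dagger)^T$ and collecting the accumulated $i^k$ from the $\pmb\eta$-differentiation with the further $i^k$ from this rewriting, and with $(-i)^{2k} = (-1)^k$ from the two source insertions, produces exactly the polynomial prefactor $\prod_l (\bar{\pmb\nu} + i\mathbf{b}^\dagger) \cdot N_3^{-1}\mathbf{e}_l$ in (\ref{SM1}). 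The left-over exponent $-i(\bar{\pmb\nu} + i\mathbf{b}^\dagger)\cdot N_3^{-1}\mathbf{b}$ is precisely the exponent in (\ref{SM1}) evaluated at $\pmb\nu = 0$; since no $\pmb\nu$-derivatives are taken in (\ref{SM1}), the explicit $\pmb\nu$-dependent form may be restored freely.

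To finish, I would combine the $(1/\pi)^{2k}/\det M$ prefactor from (\ref{11.zero}) with the $\pi^k/\det N_3$ from the Gaussian integration and invoke the Schur-complement identity $\det M \cdot \det N_3 = \det A$ to reduce the overall normalisation to $1/(\pi^k \det A)$, matching (\ref{SM1}). The main obstacle is engineering the source insertions asymmetrically so that one of the two sources enters the post-integration exponent linearly and can be differentiated out in closed form, rather than naively writing both sets of derivatives as in the standard generating-function approach; given this setup, the remaining task is careful bookkeeping of the factors of $i$ and the signs, which conspire neatly to reproduce the precise form of (\ref{SM1}).
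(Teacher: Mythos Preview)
Your proposal is correct and follows essentially the same route as the paper's own proof. The paper likewise replaces $\prod_l |p_l'|^2$ by $2k$ source derivatives (using $\pmb\nu,\bar{\pmb\nu}$ as formally independent where you introduce a separate $\pmb\eta$), evaluates the resulting complex Gaussian integral, carries out one block of $k$ derivatives in closed form to produce the polynomial prefactor, and then invokes $\det M\,\det N_3=\det A$; your write-up is in fact more explicit than the paper about \emph{why} that block of derivatives closes up (the linearity of the post-integration exponent in the second source).
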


\begin{proof}
Our initial reason to introduce differentiations is to eliminate
the factor $\prod_{l=1}^k | p_l'|^2$ in the integrand of (\ref{11.zero}).
Thus we observe
\begin{multline}\label{SM}
\rho_{(k)}(z_1^{(0)}, \dots, z_k^{(0)}) = \Big ( {1 \over \pi} \Big )^{2k} {1 \over \det M} 
{\partial^{2k} \over  \partial {\nu}_1 \cdots   \partial {\nu}_k  \partial \bar{\nu}_1 \cdots \partial \bar{\nu}_k} \\
\times
\int_{(-\infty,\infty)^{2k}} 
 \exp \Big ( -   ({1 \over \mu} \mathbf 1_k, \mathbf{p}{\, '})^\dagger { M}^{-1}
({1 \over \mu} \mathbf{1}_k, \mathbf{p}\, ')  +  i \overline{\pmb  \nu} \cdot \mathbf{p}' -  i {\pmb \nu} \cdot \mathbf{p}'{}^* \Big ) \,
\prod_{l=1}^k
dp_l'^{\rm r} dp_l'^{\rm i} 
\Big |_{{\pmb \nu} = \mathbf{0}},
\end{multline}
where here $\mathbf{p}'{}^*$ denotes the complex conjugate of the vector $\mathbf{p}'$.
In terms of the notation (\ref{NN}) we can expand
$$
({1 \over \mu} \mathbf 1_k, \mathbf{p}{\, '})^\dagger { M}^{-1}
({1 \over \mu} \mathbf{1}_k, \mathbf{p}\, ')
  = ({1 \over \bar{\mu}} \mathbf{1}_k)^TN_1 {1 \over \mu} \mathbf{1}_k +
(\mathbf{p}')^\dagger N_3 \mathbf{p}' + i (-i {1 \over \bar{\mu}} \mathbf{1}_k N_2 \cdot \mathbf{p}') +
i (-i N_2^\dagger {1 \over \mu} \mathbf{1}_k) \cdot (\mathbf{p}')^*.
$$
Substituting this in (\ref{SM}) shows that the integrand is the exponential of a quadratic form in
$\{ dp_l'^{\rm r},  dp_l'^{\rm i} \}_{l=1}^k$, allowing the integral to be evaluated (see e.g.~\cite[eq.~(15.65)]{Fo10}).
Once this is done, the differentiations over $\{ \nu_j \}_{j=1}^k$ can be carried out.
Finally, using the fact that $\det M \det N_3 = \det A$, we arrive at (\ref{SM1}).
\end{proof}

Starting from (\ref{SM1}), it is straightforward to obtain explicit formulas for
$\rho_{(1)}$ and $\rho_{(2)}$. There are further simplifications in the limit
$N \to \infty$, which is the case of interest for purposes of the present study.
We begin by considering $\rho_{(1)}$.

\begin{prop}
In the limit $N \to \infty$, and requiring that $|z| < 1$,  $\rho_{(1)}(z)$ is given by (\ref{R1}).
\end{prop}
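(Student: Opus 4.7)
The plan is to specialise the master formula (\ref{SM1}) to $k=1$, evaluate the resulting derivative in closed form, and then substitute the explicit $N\to\infty$ limits of the covariances $A,B,C$ of (\ref{ABC}). For $k=1$, the three blocks of $M$ are scalars
\[
A=\sum_{j=1}^{N}|z|^{2j},\qquad B=\sum_{j=1}^{N}j\, z\,|z|^{2(j-1)}\cdot \tfrac{1}{|z|^{2}}\bar{z}^{-1}\text{-style}\;=\;\sum_{j=1}^{N}j\, z^{j}\bar z^{\,j-1},\qquad C=\sum_{j=1}^{N}j^{2}|z|^{2(j-1)},
\]
so that for $|z|<1$ the $N\to\infty$ limits are the elementary geometric-series sums
\[
A=\frac{|z|^{2}}{1-|z|^{2}},\qquad B=\frac{z}{(1-|z|^{2})^{2}},\qquad C=\frac{1+|z|^{2}}{(1-|z|^{2})^{3}}.
\]

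Next I would invert the $2\times 2$ scalar matrix $M$ to read off the entries of $M^{-1}$ in (\ref{NN}):
\[
N_{1}=\frac{C}{AC-|B|^{2}},\qquad N_{2}=-\frac{B}{AC-|B|^{2}},\qquad N_{3}=\frac{A}{AC-|B|^{2}}.
\]
In (\ref{SM1}) with $k=1$ only a single $\bar\nu$ derivative acts on the product of $(\bar\nu+i\bar\mu^{-1}N_{2})N_{3}^{-1}$ and the exponential; evaluating at $\nu=\bar\nu=0$ gives the compact form
\[
\rho_{(1)}(z)=\frac{1}{\pi A}\,N_{3}^{-1}\!\left(1+\frac{|N_{2}|^{2}}{N_{3}|\mu|^{2}}\right)\exp\!\left(-\frac{1}{|\mu|^{2}}\Bigl(N_{1}-\frac{|N_{2}|^{2}}{N_{3}}\Bigr)\right).
\]
The collapse I am aiming for rests on the Schur-complement identity $N_{1}-|N_{2}|^{2}/N_{3}=1/A$, which follows directly from the explicit scalar inversion above, together with $N_{3}^{-1}=(AC-|B|^{2})/A$. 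Substituting these gives
\[
\rho_{(1)}(z)=\frac{1}{\pi A^{2}}\left(AC-|B|^{2}+\frac{|B|^{2}}{A|\mu|^{2}}\right)\exp\!\left(-\frac{1}{A|\mu|^{2}}\right).
\]

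Finally, I would insert the explicit values of $A,B,C$; one computes $AC-|B|^{2}=|z|^{4}/(1-|z|^{2})^{4}$, so $(AC-|B|^{2})/A^{2}=1/(1-|z|^{2})^{2}$, the exponent becomes $-(1-|z|^{2})/(|z|^{2}|\mu|^{2})=|\mu|^{-2}(1-1/|z|^{2})$, and the extra bracketed term reduces to $(1-|z|^{2})/(|z|^{4}|\mu|^{2})$. Combining yields exactly (\ref{R1}).

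The routine part is the covariance sums and the final algebraic substitutions; the only place that needs a moment of thought is recognising that the bracket $N_{1}-|N_{2}|^{2}/N_{3}$ collapses via the Schur complement to $1/A$, which is what makes the exponential assume the form predicted by (\ref{R1}). A subsidiary technical point worth noting, but not really an obstacle, is that one must check that the limits $N\to\infty$ of $A,B,C$ may be taken inside the finite algebraic expression for $\rho_{(1)}$; since $|z|<1$ the geometric series converge absolutely and the limiting $M$ remains positive definite, so this is unproblematic.
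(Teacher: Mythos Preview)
Your proposal is correct and follows essentially the same route as the paper: specialise (\ref{SM1}) to $k=1$, carry out the single $\bar\nu$-differentiation to obtain the scalar formula
\[
\rho_{(1)}(z)=\frac{1}{\pi A N_{3}}\,e^{-N_{1}/|\mu|^{2}+|N_{2}|^{2}/(|\mu|^{2}N_{3})}\Bigl(1+\frac{|N_{2}|^{2}}{|\mu|^{2}N_{3}}\Bigr),
\]
read off $N_{1},N_{2},N_{3}$ from the $2\times2$ inverse, insert the limiting covariances $A,B,C$, and simplify to (\ref{R1}). The only cosmetic difference is that you package the exponent via the Schur-complement identity $N_{1}-|N_{2}|^{2}/N_{3}=1/A$, whereas the paper computes $N_{1}$ and $|N_{2}|^{2}/N_{3}$ separately from the explicit $A,B,C$ and then subtracts; both arrive at the same exponent $-1/(A|\mu|^{2})$.
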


\begin{proof}
In the case $k=1$ the quantities $\pmb \nu, \pmb \mu, A, N_2, N_3$ are then scalars, and
(\ref{SM1}) simplifies to 
\begin{equation}\label{ABCS}
\rho_{(1)}(z) = {1 \over \pi} {1 \over A N_3} e^{-{N_1 \over |\mu|^2 }+ { |N_2|^2 \over |\mu|^2 N_3} }\Big ( 1 +
 {|N_2|^2 \over |\mu|^2 N_3} \Big ).
\end{equation}
Since $M$ is of size $2 \times 2$, its inverse is simple to compute, and in terms of (the scalars)
$A,B,C$ we read off that
$$
N_1 = {C \over AC - |B|^2}, \quad N_2 = -{B \over AC - |B|^2}, \quad
N_3 = {A \over AC - |B|^2}.
$$
Moreover, it follows from the definition (\ref{ABC}) that in the limit $N \to \infty$
\begin{equation}\label{ABCn}
A = {|z|^2 \over 1 - |z|^2}, \quad B = {z \over (1 - z \bar{z})^2}, \quad
C = {1 + z \bar{z} \over (1 - z \bar{z})^3}.
\end{equation}
Thus
$$
AC - |B|^2 = {|z|^4 \over (1 - |z|^2)^4}, \quad N_1 = {1 \over |z|^4}(1 - |z|^4), \quad
{ |N_2|^2 \over N_3} = {1 \over |z|^4} (1 - |z|^2).
$$
Substituting the above in (\ref{ABCS}) gives (\ref{R1}).
\end{proof}

The difficulty with computing $\rho_{(k)}$ for $k \ge 2$ is that $A,B,C$ are no longer
scalars, and correspondingly it is not immediate that there are structured formulas for
the matrix elements of $N_1,N_2,N_3$. In the case $k=2$, the required computations
can be carried out using computer algebra.

\begin{prop}
In the limit $N \to \infty$, and with $|w|, |z| < 1$,
\begin{multline}\label{Z2}
\rho_{(2)}(w, z) =  \Big ( {1 \over \pi}  \Big )^2 
 \exp \bigg ( {1 \over  |\mu|^2} \Big ( 1- {1 \over |z|^2 |w|^2} \Big ) \bigg )
 \\
\times {| z - w|^2 \over | 1 - w \bar{z} |^2} {1 \over |w|^2 |z|^2 (1 - |z|^2) (1 - |w|^2)} 
\bigg ( {1 \over  |\mu|^4 |w|^4 |z|^4} \\
+ {1 \over |\mu|^2} \Big ( {1 \over (1 - |w|^2)^2 |z|^2} + {1 \over (1 - |z|^2)^2 |w|^2} +
{1 \over w \bar{z} (1 - z \bar{w})^2 } +  {1 \over z \bar{w} (1 - w \bar{z})^2 } \Big )\\
+ |w|^2 |z|^2 \Big (
{1 \over (1 - |w|^2)^2   (1 - |z|^2)^2 } + {1 \over (1 - z \bar{w})^2   (1 - \bar{z} w)^2} \Big ) \bigg ).
\end{multline}
\end{prop}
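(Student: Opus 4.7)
The plan is to specialize the general formula (\ref{SM1}) to $k=2$, evaluate all matrix ingredients explicitly in the $N \to \infty$ limit, carry out the two $\bar{\nu}$-derivatives, and simplify. For the first step, the $2 \times 2$ blocks $A, B, C$ follow from evaluating the sums in (\ref{ABC}) and passing to the limit (using $|z_j^{(0)}|<1$); this generalises (\ref{ABCn}) to
\begin{equation*}
A_{jl} = \frac{z_j \bar z_l}{1-z_j \bar z_l}, \qquad B_{jl} = \frac{z_j}{(1-z_j \bar z_l)^2}, \qquad C_{jl} = \frac{1+z_j \bar z_l}{(1-z_j \bar z_l)^3},
\end{equation*}
where I set $z_1 = z$, $z_2 = w$. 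A Cauchy double-alternant computation gives $\det A = |z|^2|w|^2|z-w|^2/[(1-|z|^2)(1-|w|^2)|1-z \bar w|^2]$, already displaying the factor $|z-w|^2/|1-z\bar w|^2$ that sits in front of (\ref{Z2}).

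The second step is to invert $M$ via the Schur complement $S = C - B^\dagger A^{-1} B$, giving $N_3 = S^{-1}$, $N_2 = -A^{-1} B S^{-1}$, $N_1 = A^{-1} + A^{-1} B S^{-1} B^\dagger A^{-1}$, and $\det M = \det A \cdot \det S$. Because $N_2 N_3^{-1} N_2^\dagger = A^{-1} B S^{-1} B^\dagger A^{-1}$, the scalar exponent in (\ref{SM1}) evaluated at $\pmb{\nu} = \bar{\pmb{\nu}} = \mathbf 0$ collapses to $-\mathbf{1}_k^T A^{-1} \mathbf{1}_k/|\mu|^2$. A direct $2 \times 2$ calculation (using $A_{jl} = 1/(1-z_j \bar z_l) - 1$ and the identity derived just above) simplifies this to $(1/|\mu|^2)(1 - 1/(|z|^2|w|^2))$, matching the Gaussian factor in (\ref{Z2}) and generalising the $k=1$ case of the preceding proposition.

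The third step is the $\bar{\nu}$-differentiation. Since $\prod_{l=1}^2\bigl((\bar{\pmb{\nu}} + i\mathbf{1}_k^T N_2/\bar\mu)\cdot N_3^{-1} \mathbf{e}_l\bigr)$ is bilinear in $\bar{\pmb{\nu}}$ and the exponent is affine in $\bar{\pmb{\nu}}$ once $\pmb{\nu} = \mathbf 0$, the Leibniz rule produces, after evaluation at $\bar{\pmb{\nu}} = \mathbf 0$, an explicit polynomial in the entries of $N_2$ and $N_3^{-1}$. The powers of $1/\mu$ and $1/\bar\mu$ group naturally into $|\mu|^{-2j}$ coefficients for $j = 0, 1, 2$: the $j=0$ piece comes from both derivatives falling on the bilinear product, $j=2$ from both falling on the exponent, and $j=1$ from one on each. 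This is precisely the structure of (\ref{Z2}).

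The main obstacle is the last-mile algebra: rewriting the resulting rational function of $z,w,\bar z,\bar w$ in the compact factored form of (\ref{Z2}), with common factor $|z-w|^2/[|1-w\bar z|^2 |w|^2 |z|^2(1-|z|^2)(1-|w|^2)]$ extracted and the three $Y_0, Y_1, Y_2$-style groupings of (\ref{L2}) assembled. This requires combining many Cauchy partial fractions and is, as the authors note, most readily handled by computer algebra. Useful sanity checks along the way are that the $|\mu| \to \infty$ limit reproduces the determinantal formula (\ref{R3}) at $k=2$, which follows from keeping only the $j=0$ contribution above (corresponding to $l=0$ in (\ref{R2a})), and that specialising to $k=1$ (by formally collapsing one of the derivatives) recovers the explicit $\rho_{(1)}(z)$ of (\ref{R1}).
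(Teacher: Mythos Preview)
Your proposal is correct and follows essentially the same route as the paper: specialise (\ref{SM1}) to $k=2$, insert the limiting $2\times2$ blocks $A,B,C$, carry out the two $\bar\nu$-differentiations, and simplify with computer algebra. Your use of the Schur complement to collapse the exponent to $-\mathbf{1}_2^T A^{-1}\mathbf{1}_2/|\mu|^2$ is a small conceptual bonus over the paper, which instead computes $\mathbf{1}_2^T N_1\mathbf{1}_2$ and $(\mathbf{1}_2^T N_2)\cdot N_3^{-1}(N_2^\dagger\mathbf{1}_2)$ separately by machine and then combines them; otherwise the two arguments coincide.
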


\begin{proof}
Introduce the notation
$$
 i {1 \over \bar{\mu}} \mathbf 1_2 N_2 \cdot N_3^{-1} \mathbf e_1 =: G_1, \quad
 i {1 \over \bar{\mu}} \mathbf 1_2 N_2 \cdot N_3^{-1} \mathbf e_2 =: G_2, \quad
h_{jk} =  \mathbf e_j \cdot N_3^{-1} \mathbf e_k.
$$
Performing the differentiations as required by (\ref{SM1}) shows, upon minor manipulation,
that 
\begin{multline}\label{ER}
\rho_{(2)}(z, w) = {1 \over \pi^2} {1 \over \det A} 
 e^{-{1 \over |\mu|^2 } {\mathbf 1}_2^T N_1 {\mathbf 1}_2} e^{ {1 \over |\mu|^2 } (  {\mathbf 1}_2^T N_2) \cdot N_3^{-1} (N_2^\dagger  {\mathbf 1}_2)}
\\
\times \Big ( (|G_1|^2 + h_{11}) (|G_2|^2 + h_{22}) +  (G_1 \overline{G}_2  + \overline{h}_{12}) (\overline{G}_1 G_2 + h_{12}) - |G_1 G_2|^2 \Big ).
\end{multline}

The $k=2$ modification of (\ref{ABCn}) is
\begin{equation}\label{ABCs}
A = \Big [ {z_i \overline{z}_j  \over 1 - z_i \overline{z}_j} \Big ]_{i,j=1,2}, \quad B =   \Big [ {z_i \over (1 - z_i \bar{z}_j)^2}  \Big ]_{i,j=1,2} , \quad
C =   \Big [ {1 + z_i \bar{z}_j \over (1 - z_i \bar{z}_j)^3}  \Big ]_{i,j=1,2},
\end{equation}
where $z_1 = z$ and $z_2 = w$. Use of the(see e.g.~\cite[eq,~(4.34)]{Fo10}) known Cauchy double alternant determinant formula  gives
$$
{1 \over \det A} = {(1 - |w|^2)  (1 - |z|^2)  (1 - z \bar{w}) (1 - w \bar{z})  \over |z|^2 |w|^2 |w - z|^2}
$$
For the other quantities in (\ref{ER}), beginning with (\ref{ABCs}), we make use of computer algebra to compute
\begin{align*}
{\mathbf 1}_2^T N_1 {\mathbf 1}_2 & = \Big ( - 1 + {1 \over |w|^4 |z|^4} \Big ) \\
( {\mathbf 1}_2^T N_2) \cdot N_3^{-1} (N_2^\dagger  {\mathbf 1}_2) & = \Big (  {1 \over |w|^4 |z|^4} -  {1 \over |w|^2 |z|^2} \Big ) \\
 G_1 & = {i \over \bar{ \mu}} \Big (  {\bar{z} - \bar{w} \over
 (1 - |z|^2) (1 - w \bar{z}) \bar{w} \bar{z} } \Big ) \\
 h_{11} & = {|z|^2 | z - w|^2 \over (1 - |z|^2)^3  (1 - z \overline{w}) (1 - w \overline{z})} \\
 h_{12} &= - {z \overline{w} | z - w|^2 \over (1 - |w|^2) (1 - |z|^2) (1 - z \overline{w})^3}
 \end{align*}
 as well as $G_2 = G_1 |_{w \leftrightarrow z}$, $h_{22} = h_{11} |_{w \leftrightarrow z}$,
 $h_{21} = \overline{h}_{12} = h_{21}  |_{w \leftrightarrow z}$. Substituting in (\ref{ER}) and simplifying gives (\ref{Z2}).

\end{proof}

It remains to verify that (\ref{Z2}) agrees with (\ref{L1}). In the notation of (\ref{L1}), this requires checking that
\begin{align}
Y_0(z,w) & = {|z-w|^2 \over | 1 - w \bar{z} |^2} { |w|^2 |z|^2 \over (1 - |z|^2) (1 - |w|^2)} \nonumber \\
& \qquad \times \bigg ( {1 \over (|1 - |w|^2)^2 (1 - |z|^2)^2} +
{1 \over (|1 - z \overline{w})^2 (1 - \bar{z} w )^2} \bigg ) \label{Y0} \\
Y_1(z,w) - Y_2(z,w) & =
{ |z - w|^2 \over |1 - w\bar{z}|^2} {|wz|^2 \over (1 - |z|^2)(1 - |w|^2)} \nonumber \\
& \quad \times
\bigg ( {1 \over (1 - |w|^2)^2 |z|^2} + {1 \over (1 - |z|^2)^2 |w|^2} +
{1 \over w \bar{z} (1 - z \bar{w})^2} +
{1 \over z \bar{w} (1 - w \bar{z})^2}  \Big ) \label{Y1} \\
Y_2(z,w) & = {|z-w|^2 \over |1 - w \bar{z} |^2 (1 - |z|^2) (1 - |w|^2)} \label{Y2}
\end{align}
Comparing with (\ref{L2}), we see that (\ref{Y2}) is satisfied by definition, while (\ref{Y0}) is equivalent to the
Borchardt identity (see e.g.~\cite{Si04})
$$
\det \Big [ {1 \over (1 - x_j y_k)^2} \Big ]_{j,k=1}^N =
{\prod_{1 \le j < k \le N} (x_k - x_j) (y_k - y_j) \over \prod_{j,k=1}^N (1 - x_j y_k)}
{\rm perm} \, \Big [ {1 \over 1 - x_j y_k} \Big ]_{j,k=1}^N
$$
in the case $N=2$ with $x_1=z, y_1=\bar{z}, x_2 = w, y_2 = \bar{w}$. While we don't know of any structured explanation
of (\ref{Y1}), it can be recast as a polynomial identity in $z,\bar{z},w,\bar{w}$ upon clearing denominators, and verified by
computer algebra.

\begin{remark} The form (\ref{Z2}) contains the factor $|z-w|^2$, showing a repulsion between near degenerate
eigenvalues of the same form as that for complex Gaussian matrices (see e.g.~\cite[\S 15.1]{Fo10}).
\end{remark}

\section*{Acknowledgements}
The work is part of a research program supported by the Australian Research Council 
Centre of Excellence for Mathematical and Statistical Frontiers. PJF also acknowledges
partial support from the Australian Research Council 
Grant DP170102028.


\providecommand{\bysame}{\leavevmode\hbox to3em{\hrulefill}\thinspace}
\providecommand{\MR}{\relax\ifhmode\unskip\space\fi MR }
\providecommand{\MRhref}[2]{%
  \href{http://www.ams.org/mathscinet-getitem?mr=#1}{#2}
}
\providecommand{\href}[2]{#2}

  \end{document}